\newtheorem{proposition}{Proposition}[]
\newtheorem{definition}{Definition}[]
\journal{Journal of Multivariate Analysis}
\begin{document}

\begin{frontmatter}



\title{A novel framework for joint sparse \\ clustering and alignment of functional data}


\author{Valeria Vitelli}

\address{Oslo Center for Biostatistics and Epidemiology, Department of Biostatistics, \\ University of Oslo, Sognsvannsveien 9, 0372 Oslo, Norway}

\begin{abstract}
We propose a novel framework for sparse functional clustering that also embeds an alignment step. Sparse functional clustering means finding a grouping structure while jointly detecting the parts of the curves' domains where their grouping structure shows the most. Misalignment is a well-known issue in functional data analysis, that can heavily affect functional clustering results if not properly handled. Therefore, we develop a sparse functional clustering procedure that accounts for the possible curve misalignment: the coherence of the functional measure used in the clustering step to the class where the warping functions are chosen is ensured, and the well-posedness of the sparse clustering problem is proved. A possible implementing algorithm is also proposed, that jointly performs all these tasks: clustering, alignment, and domain selection. The method is tested on simulated data in various realistic situations, and its application to the Berkeley Growth Study data and to the AneuRisk65 data set is discussed.
\end{abstract}



\begin{keyword}
functional data analysis \sep clustering \sep sparsity \sep alignment


\end{keyword}

\end{frontmatter}


\section{Introduction}\label{sec:intro}

Finding sparse solutions to clustering problems has emerged as a hot topic in recent years, both in the statistics and in the machine learning community (see \cite{fm}, \cite{twitt} and \cite{elhamifar2013sparse}, among the many possible references). However, only very recently the problem started to receive attention also in the functional data community, surprisingly enough given the natural link between the two frameworks: functional data analysis (FDA) aims at developing statistical methods for infinite-dimensional data, and sparsity is more likely when data are high-dimensional. There are two main research lines where sparsity has been treated in the FDA literature: (i) as each functional datum is in reality observed on a finite grid, sparsity may occur through a local heterogeneity in the frequency of the discretization, or in other words the data might be \emph{sparsely observed} \citep{james2000principal, yao2005functional, muller2010dynamic, di2014multilevel, yao2016probability, stefanucci2018pca, kraus2019inferential}; (ii) more naturally linked to the functional form of the data, sparsity can be defined as local heterogeneity in the concentration of information into the functional space: this aspect has been studied in the regression setting, mainly under linear assumptions \citep{james2009functional, mckeague2010fractals, kneip2011factor, aneiros2014variable, novo2019fast}. There has also been some work on sparse functional PCA \citep{allen2013sparse, huang2009analysis, di2009multilevel}, and on domain selection in inferential approaches such as functional ANOVA \citep{pini2016interval, pini2017interval, pini2018domain}. However, very few proposals deal with unsupervised learning: when it comes to methods specifically targeted to clustering, some very recent approaches \citep{fraiman2016feature, berrendero2016shape, berrendero2016variable, fv17} propose to cluster the curves while jointly detecting the most relevant portions of their domain to the clustering purposes. The latent assumption is redundancy of information: curves cluster only on a reduced part of the domain, while most of the observed functional signal might be totally unrelevant to the scopes of the analysis. 

\begin{figure}[t]
\centerline{\includegraphics[width=\textwidth]{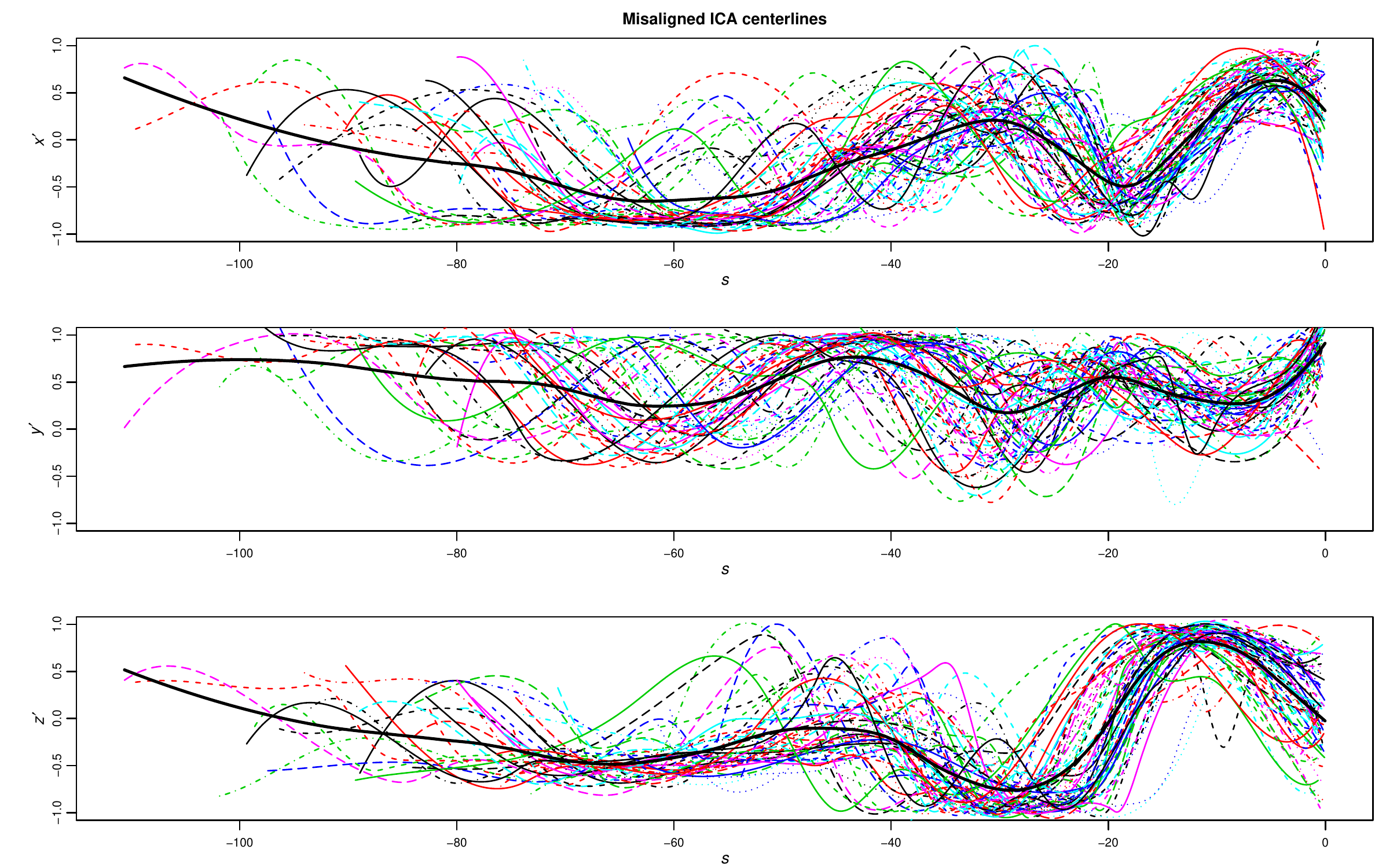}}
\caption{data from the AneuRisk65 data set. From top to bottom, first derivative of the three spatial coordinates ($x^\prime$, $y^\prime$ and $z^\prime$, respectively) of the ICA centerlines of 65 subjects suspected to be affected by cerebral aneurysm. \label{fig:aneurisk_data}}
\end{figure}

A frequent issue when dealing with curve clustering is misalignment: curves might show a {\em phase variability}, describing some abscissa variation ancillary to the scopes of the analysis (e.g., biological clock, measurement start, length of the task,$\ldots$), while the scope of a clustering method is, instead, to capture the {\em amplitude variability}. An evident example of this problem is shown in Figure \ref{fig:aneurisk_data}, where the curves from the AneuRisk65\footnote{The AneuRisk project is a a scientific endeavour that aimed at investigating the role of vessel morphology, blood fluid dynamics and biomechanical properties of the vascular wall, on the pathogenesis of cerebral aneurysms. The project has gathered together researchers of different scientific fields, ranging from neurosurgery and neuroradiology to statistics, numerical analysis and bio-engineering (see \texttt{https://statistics.mox.polimi.it/aneurisk/} for more details).} dataset are displayed: the data consist in the three spatial coordinates (in mm) of 65 Internal Carotid Artery (ICA) centerlines, measured on a fine grid of points along a curvilinear abscissa (in mm), decreasing from the terminal bifurcation of the ICA towards the heart. Estimates of these three-dimensional curves are obtained by means of three-dimensional free-knot regression splines, as described in \cite{sangalli2009efficient}. The first derivatives $x^\prime$, $y^\prime$ and $z^\prime$ (preferred to the ICA original coordinates, which depend on the location of the scanned volume) of the estimated curves are displayed in the figure: the apparent phase variability is strongly dependent on the dimensions and proportions of patients' skulls, and is a nuisance variability meaningless when the study of the heterogeneity of the cerebral morphology is concerned.

Misalignment is a frequent problem, which might heavily affect and confound the clustering results if not properly handled \citep{marron2015functional}: many authors have studied this problem in recent years (see the nice review by \cite{jacques2014functional}), and claim that jointly decoupling the two sources of variation is often convenient \citep{chudova2002probabilistic,gaffney2005joint,ssvv,mattar2012unsupervised,zhang2014joint}. When aiming at sparse clustering, and when the curves are also misaligned, a very common approach is to first align the curves and then use a sparse functional clustering method to estimate the groups and select the domain (as we have previously proposed for analysing the Berkeley Growth study data in \cite{fv17}). However, it is well-known that aligning and clustering the curves jointly is beneficial for the analysis, and many methods to jointly cluster and align curves have already been proposed in the literature on functional data. The starting point for this paper is thus to adapt functional sparse clustering to align the data as well. The only available alternative to such purpose is the method proposed in \cite{cremona2018probabilistic} for the analysis of omics signals, that identifies local amplitude features (the so called ``motifs'') while accounting for phase variability. However, the latter method is targeted to the idea of discovering the functional motifs (i.e., typical local shapes that might recur within each curve, or across several curves in the sample), while our framework follows the ``domain selection'' strategy.

In this paper we propose a functional clustering approach that jointly aligns the curves, and that also performs domain selection, meaning that it estimates the portions of the domain where the grouping structure mostly shows. The problem is correctly framed, from a mathematical viewpoint, as a variational problem having the grouping structure, the warping functions, and the domain selector as a solution. Some properties are required to the functional space (and to the associated metric in particular), and to the class where the warping functions are chosen, to ensure the well-posedness of such variational problem. An iterative implementing algorithm, rooted both on the $K$-means clustering and alignment algorithm introduced in \cite{ssvv}, and on the sparse functional clustering method described in \cite{fv17}, is also proposed. The paper is structured as follows: Section \ref{sec:method} gives all the methodological details on our sparse clustering and alignment proposal, together with the associated implementing algorithm. The results of two simulation studies are reported in Section \ref{sec:simulations}, and the application of the proposed methodology to the Berkeley Growth Study data \citep{tuddenham1954physical} and to the AneuRisk 65 dataset \citep{sangalli2009case} are described in Sections \ref{sec:growth} and \ref{sec:aneurisk}, respectively. Finally, some conclusions and discussion of further developments are given in Section \ref{sec:conc}.

\section{Joint sparse clustering and alignment: a unified framework}\label{sec:method}

In this section we provide the general framework for embedding both the alignment step and the sparsity constraint into the variational problem describing curve clustering. The clustering and alignment variational problem is first defined in subsection \ref{subsec:align}, then the sparsity constraint is described in subsection \ref{subsec:sparse}, and finally the unified variational problem for solving the sparse clustering and alignment problem is proposed in subsection \ref{subsec:all}. Some theoretical considerations on consistency and well-posedness in a more general mathematical setting are also discussed in subsection \ref{subsec:space}.

\subsection{Joint clustering and alignment: decoupling phase and amplitude variability}\label{subsec:align}

Let $(D,\mathcal{F},\mu)$ be a measure space, with $D \subset \mathbb{R}$ compact, $\mathcal{F}$ a functional space suited to the application at hand (assume $L^2(D)$ for the time being), and $\mu$ the Lebesgue measure. Let $f_1,\ldots,f_n \in \mathcal{F}$ be a functional dataset, and $W$ the class of warping functions indicating the allowed transformations of the abscissa to align the curves. 

A very popular approach in the FDA literature is to study the amplitude and phase variation in a functional dataset via equivalence classes (see \cite{srivastava2011registration} and \cite{srivastava2016functional} for a nice introduction): in such approach, clustering can be defined via a proper and mathematically consistent optimization problem, where alignment is performed by selecting, for each curve in the dataset, an optimal aligning function within a class $W$. To ensure the well-posedness of the resulting optimization problem, some properties are required to $\mathcal{F}$ and $W$ \citep{vantini2012definition}:
\begin{itemize}
\item[(i)] $\mathcal{F}$ is a metric space equipped with the metric $d(\cdot,\cdot)\!:\mathcal{F}\times\mathcal{F}\to \mathbb{R}^+_0$ that measures the distance between two curves;
\item[(ii)] $W$ is a sub-group, with respect to composition $\circ$, of the group of continuous automorphisms;
\item[(iii)] $\forall f \in \mathcal{F}, \forall h \in W$ we have that $f \circ h \in \mathcal{F};$
\item[(iv)] \emph{W-invariance} property of $d(\cdot,\cdot)$: for any $f_1,f_2\in \mathcal{F}$ and any $h \in W$
\begin{equation}\label{eq:Winvariance}
d(f_1,f_2)=d(f_1 \circ h, f_2 \circ h),
\end{equation}
meaning that it is not possible to create fictitious increments in the distance between two curves $f_1$ and $f_2$ only by warping both $f_1$ and $f_2$ with $h$.
\end{itemize}
While properties (i)-(iii) are quite natural requirements, the property (iv) of W-invariance of the metric $d$ as expressed in (\ref{eq:Winvariance}) is particularly crucial, since it ensures that we can compare equivalence classes of functions, and not just individual functions: specifically, the equivalence class of a function $f\in\mathcal{F}$ includes all the versions of the function after its warping via an element of $h \in W$, i.e., $f \circ h.$ Therefore, the properties (i)-(iv) above ensure a coherent framework for decoupling phase and amplitude variability: phase variation is incorporated within equivalence classes, while amplitude variation appears across equivalence classes. More precisely:
\begin{definition}
Aligning $f_1\in \mathcal{F}$ to $f_2\in\mathcal{F}$ according to $(d,W)$ means finding $h^* \in W$ that minimizes $d(f_1\!\circ\!h,f_2)$ with respect to $h$.
\end{definition}
\noindent Hence, phase variability is captured by the optimal warping function $h^*$, while amplitude variability is the remaining variability between $f_1\!\circ\!h^*$ and $f_2$ (see \cite{vantini2012definition} for a very nice description of this framework).

When the aim is clustering and aligning the functional dataset $f_1,\ldots,f_n \in \mathcal{F}$, the most natural approach within this framework is to focus on distance-based approaches, such as $K$-means: these methods entail finding the optimal cluster partition $\{C_1,\ldots,C_K\}$ by solving an optimization problem based on a measure of the between-cluster distances, depending on the chosen metric $d.$ When fixing $K$ groups, and when focussing on $K$-means clustering, the clustering and alignment variational problem can be then given as: find the set of optimal aligning functions $\{\tilde{h}_1, \ldots, \tilde{h}_n\}\in W$ and the best data partition $\{\tilde{C}_1,\ldots,\tilde{C}_K\}$ maximising 
\begin{equation}\label{eq:clustAndAlign}
\max_{\{h_1,\ldots,h_n\}\in W,\{C_1,\ldots,C_K\}} \int_D g(f_1(h_1(x)),\ldots,f_n(h_n(x));\{C_1,\ldots,C_K\})dx,
\end{equation}
where we assume $f_i \in L^2(D_i)$, $D_i$ being the warped domain of each function (compact), and $D = \bigcup_{i=1}^n D_i $ being the union of the warped domains of the functions in the dataset. We use $dx$ instead of $d\mu(x)$ for ease of notation. The function $g(\cdot)$ defines the point-wise between-cluster sum-of-squares of the functions in the dataset, which depends on $d$. A quite natural choice for $d$ in this functional space is the normalized $L^2$ norm
\begin{equation}\label{eq:dL2norm}
d(f_1, f_2) = \frac{1}{\sqrt{\mu(D_1 \cap D_2)}} \left( \int_{D_1 \cap D_2} (f_1(x)-f_2(x))^2 dx \right)^{1/2},
\end{equation}
since this choice ensures that properties (i)-(iv) above hold when the group of strictly increasing affine transformations is used as the class for warping functions
\begin{equation}\label{eq:Waffine}
W = \{h : h(x) = ax + b \text{ with } a \in \mathbb{R}^+, b \in \mathbb{R}\}.
\end{equation}
Given the choice in (\ref{eq:dL2norm}), the point-wise between-cluster sum-of-squares in (\ref{eq:clustAndAlign}) takes the form
\begin{align}\label{eq:BCSS}
& g\left( f_1(h_1(x)),\ldots,f_n(h_n(x));\{C_1,\ldots,C_K\} \right) = \nonumber\\ 
& = \frac{1}{n} \sum^n_{i,j = 1} \left\{ \left( f_i(h_i(x)) - f_j(h_j(x)) \right)^2 \cdot \frac{1_{\{h_i^{-1}(D_i) \cap h_j^{-1}(D_j)\}}(x)}{\sqrt{\mu(h_i^{-1}(D_i) \cap h_j^{-1}(D_j))}} \right\} + \nonumber\\
&  - \sum^K_{k=1} \frac{1}{n_k} \sum_{i,j \in C_k} \left\{ \left( f_i(h_i(x)) - f_j(h_j(x)) \right)^2 \cdot \frac{1_{\{h_i^{-1}(D_i) \cap h_j^{-1}(D_j)\}}(x)}{\sqrt{\mu(h_i^{-1}(D_i) \cap h_j^{-1}(D_j))}} \right\},
\end{align}
where $n_k = |C_k|$ for all $k=1,\ldots, K.$

The strategy proposed in \cite{ssvv} for tacking the variational problem (\ref{eq:clustAndAlign}) is to use an iterative algorithm alternating between the following two steps, the so called \emph{K-means alignment} (KMA):
\begin{enumerate}
\item holding the cluster partition $\{C_1,\ldots,C_K\}$ fixed, solve the alignment problem alone by maximizing the criterion to find the best set of warping functions $\{\tilde{h}_1,\ldots,\tilde{h}_n\}$;
\item holding the set of warping functions $\{\tilde{h}_1,\ldots,\tilde{h}_n\}$ fixed, find the best partition $\left\{\tilde{C}_1,\ldots,\right.$ $\left.\tilde{C}_K\right\}$ via a $K$-means step applied to the aligned function.
\end{enumerate}
\cite{ssvv} suggest to add to step 1 above a normalization step, so that the average warping undergone by the curves within each cluster is the identity transformation $h(x)=x$. Normalization is used to select, among all candidate solutions to the optimization problem, the one that leaves the average locations of the clusters unchanged, thus avoiding the drifting apart of clusters or the global drifting of the overall set of curves. We keep the same normalization to the purposes of the present paper, and for $k=1,\ldots, K$ define
\begin{equation}\label{eq:normalization}
\bar{h}_k(x) = \frac{1}{n_k}\sum_{i \in C_k} \tilde{h}_i(x),
\end{equation} 
so that the final version of the aligned curves for $i=1,\ldots,n$ is $\tilde{f}_i = f_i \circ \tilde{h}_i \circ (\bar{h}_k)^{-1},$ where $k$ is such that $i \in C_k$.

Note that the variational problem defined by (\ref{eq:clustAndAlign}) and (\ref{eq:BCSS}) is completely general, and can be adapted to the choices of $\mathcal{F},$ $W$ and $d$ most suited to the problem at hand, as detailed in subsection \ref{subsec:space}.

\subsection{The sparsity constraint}\label{subsec:sparse}

The strategy proposed in (\ref{eq:clustAndAlign}) is a mathematically coherent approach for decoupling phase and amplitude variability, and it has proved to work in a variety of situations, also for functional spaces different from $L^2$ \citep{ssvv}. However, the curves might cluster on a reduced part of the domain, even when needing to be aligned. Nonetheless, a clustering might be present in the phase as well, thus confounding the amplitude variation especially when limited to some portions of the domain. Given all these motivations, we would like to embed into the variational problem defined by (\ref{eq:clustAndAlign}) a domain selection step, for clustering and aligning the curves while jointly detecting the most relevant portions of their domain to the clustering purposes. 

Following the approach first described in \cite{fv17}, we will introduce a weighting function $w: D\rightarrow\mathbb{R}$ having the scope of giving weight to those parts of the domain where the clustering shows the most. Two elements are key to this purpose: (a) the variational problem has to be written such that the optimal $w(\cdot)$ adapts to the point-wise between-cluster sum-of-squares function, as a measure of the point-wise level of clusterisation, and (b) a constraint on $w(\cdot)$ must be included to ensure the sparsity of the solution, with a sparsity parameter for tuning the desired sparsity level. Having those two elements in mind, the variational problem defining sparse functional clustering is written as \citep{fv17}
\begin{align}\label{eq:sparseClust}
 \max_{w \in L^2(D); C_1, \ldots  C_K} & \int_{D} w(x) g(f_1(x),\ldots,f_n(x);C_1,\ldots,C_K) dx, \\
 \text{subject to}\ \ & \Vert w(x) \Vert_{L^2(D)} \leq 1,\ \ w(x) \geq 0\  \mu\mbox{-a.e. and}\ \mu(\left\lbrace x \in D: w(x)=0\right\rbrace ) \geq m. \nonumber
\end{align}
The first two constraints on the weighting function ensure the well-posedness of the optimization problem, while the latter is the sparsity constraint: $w(\cdot)$ must be zero on a set of measure at least $m$. This constraint has the practical implication that, in order to reach the global maximum, those portions of the domain where the point-wise between-cluster sum-of-squares function $g(\cdot)$ is minimal have to be given null weight; viceversa, if curves belonging to different clusters differ greatly in a Borel set $B \subset D$, then $w(\cdot)$ should be strictly positive on that subset, with each value \(w(x)\) reflecting the importance of $x\in D$ for partitioning the data. 

When fixing the partition $\{ C_1, \ldots  C_K \},$ a result given in \cite{fv17} provides the explicit form of the unique optimal $w(\cdot)$ for problem (\ref{eq:sparseClust}). However, that result is specific to the choice of the functional space ($L^2$) and the distance (the natural metric in that space). Hence, aim of the remaining part of the section is to prove the validity of the result when combined with alignment (subsection \ref{subsec:all}), and then proving its generality with respect to the choice of the metric and functional space (subsection \ref{subsec:space}).

\vspace{.1cm}
\noindent \emph{Remark.} \cite{fv17} have already investigated the possibility of changing the sparsity constraint to a more classical $L^1$ penalization, for achieving the same weighting effect in a soft-thresholding fashion. However, they concluded that this could lead to sub-optimal solutions when compared to the hard-thresholding constraint proposed here, without removing the need for parameter tuning (instead of the sparsity parameter $m$, we would need a new parameter giving the constraint on the $L^1$ norm of $w$). Therefore, we here consider only hard-thresholding to the purposes of combining sparse clustering to an alignment procedure, and leave the possibility of using soft-thresholding to future speculation.

\subsection{The variational problem for joint sparse clustering and alignment}\label{subsec:all}

We here combine the two variational problems given in (\ref{eq:clustAndAlign}) and (\ref{eq:sparseClust}) to provide a unified solution for handling both misalignment and sparse clustering. We define \emph{joint sparse clustering and alignment} as the solution to a variational constrained optimization problem of the form 
\begin{align}\label{eq:sparseClustAlign}
 \max_{w \in L^2(D), \{h_1,\ldots,h_n\}\in W,\{C_1,\ldots,C_K\}} & \int_{D} w(x) g\left( f_1(h_1(x)),\ldots,f_n(h_n(x));\{C_1,\ldots,C_K\} \right) dx, \\
\text{subject to }\Vert w(x) \Vert_{L^2(D)} \leq 1, & w(x)\geq 0\;\mu-\text{a.e.},\;\mu(\left\lbrace x \in D: w(x)=0\right\rbrace ) \geq m. \nonumber
\end{align}
where the point-wise between-cluster sum-of-squares $g(\cdot)$ in (\ref{eq:sparseClustAlign}) takes the form in (\ref{eq:BCSS}). Problem (\ref{eq:sparseClustAlign}) can be tackled via an iterative algorithm that alternates the following steps:
\begin{enumerate}
\item holding $w \in L^2(D)$ and the cluster partition $\{C_1,\ldots,C_K\}$ fixed, solve the alignment problem alone by maximizing the criterion (\ref{eq:sparseClustAlign}) to find the best set of warping functions $\{\tilde{h}_1,\ldots,\tilde{h}_n\}$;
\item holding $w \in L^2(D)$ and the set of warping functions $\{\tilde{h}_1,\ldots,\tilde{h}_n\}$ fixed, find the best partition $\{\tilde{C}_1,\ldots,\tilde{C}_K\}$ via a functional clustering step applied to the aligned functions, using the functional metric $d$ weighted via $w$;
\item holding the set of warping functions $\{\tilde{h}_1,\ldots,\tilde{h}_n\}$ and the partition $\{\tilde{C}_1,\ldots,\tilde{C}_K\}$ fixed, find the optimal weighting function $\tilde{w}(x).$
\end{enumerate}
This approach allows to solve one maximization problem at time, and is proved to increase the objective function at every step. When thinking to the well-posedness of the variational problem, only the first one among the three steps above needs closer inspection. Point 2 is trivial, given that the number of possible partitions is finite, and thus the problem is obviously well-posed and has a unique solution. Point 3 refers to solving a variational problem with the same characteristics as in \cite{fv17}, and can thus be solved using Theorem 2 in that paper, provided that the correct expression of the point-wise between-cluster sum-of-squares is used. Let us thus focus on point 1.

We need to tackle the following variational problem
\begin{equation}\label{eq:align}
\max_{\{h_1,\ldots,h_n\}\in W} \int_D w(x)\left\{ \frac{1}{n} \sum^n_{i,j = 1} G_{ij}(x) - \sum^K_{k=1} \frac{1}{n_k} \sum_{i,j \in C_k} G_{ij}(x) \right\} dx,
\end{equation}
where, from (\ref{eq:BCSS})
$$
G_{ij}(x) = \left( f_i(h_i(x)) - f_j(h_j(x)) \right)^2 \cdot \frac{1_{\{h_i^{-1}(D_i) \cap h_j^{-1}(D_j)\}}(x)}{\sqrt{\mu(h_i^{-1}(D_i) \cap h_j^{-1}(D_j))}}.
$$
Now, given that all sums are finite, we can obviously rearrange problem (\ref{eq:align}) as follows
\begin{equation}\label{eq:align2}
\max_{\{h_1,\ldots,h_n\}\in W} \left\{ \frac{1}{n} \sum^n_{i,j = 1} \int_D w(x)G_{ij}(x)dx - \sum^K_{k=1} \frac{1}{n_k} \sum_{i,j \in C_k} \int_D w(x)G_{ij}(x)dx \right\},
\end{equation}
which is exactly equal to the maximization of the between-cluster sum-of-squares, when using the following weighted distance among functions
\begin{equation}\label{eq:dL2normW}
d_w(f_1, f_2) = \frac{1}{\sqrt{\mu(D_1 \cap D_2)}} \left( \int_{D_1 \cap D_2} w(x)(f_1(x)-f_2(x))^2 dx \right)^{1/2},
\end{equation}
instead of the unweighted norm in (\ref{eq:dL2norm}). The weighted distance defined in (\ref{eq:dL2normW}) enjoys good mathematical properties, among which W-invariance:
\begin{proposition}\label{prop:Winvariance}
The metric $d_w(\cdot, \cdot)$ in (\ref{eq:dL2normW}) is W-invariant when choosing $W$ as in (\ref{eq:Waffine}).
\end{proposition}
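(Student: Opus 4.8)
The plan is to establish the $W$-invariance of $d_w$ by a direct change of variables, following the classical argument that proves this property for the unweighted normalized $L^2$ metric in (\ref{eq:dL2norm}), and then checking that the presence of the weight $w$ does not destroy the cancellation of the Jacobian against the normalizing constant.

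First I would fix $h \in W$, so that $h(x) = ax + b$ with $a > 0$, and write $d_w(f_1\circ h, f_2\circ h)^2$ out explicitly. Since $h$ is a bijection, the domain of $f_i \circ h$ is $h^{-1}(D_i)$ and $h^{-1}(D_1)\cap h^{-1}(D_2) = h^{-1}(D_1\cap D_2)$, so
\[
d_w(f_1\circ h, f_2\circ h)^2 = \frac{1}{\mu(h^{-1}(D_1\cap D_2))}\int_{h^{-1}(D_1\cap D_2)} w(x)\,\big(f_1(h(x)) - f_2(h(x))\big)^2\, dx.
\]
I would then substitute $y = h(x)$. The crucial structural fact is that an affine $h$ has \emph{constant} Jacobian $h'(x)\equiv a$: the substitution turns $dx$ into $a^{-1}\,dy$, maps the integration domain onto $D_1\cap D_2$, and carries the weight to $w\circ h^{-1}$, while simultaneously $\mu(h^{-1}(D_1\cap D_2)) = a^{-1}\mu(D_1\cap D_2)$. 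The factor $a^{-1}$ taken out of the integral and the factor $a$ produced by the reciprocal of the measure cancel exactly, and one is left with $d_w(f_1,f_2)^2$, provided the weight is transported consistently along the warp — that is, reading the weight attached to a pair of curves as being warped together with them, so that $w\circ h^{-1}$ on the rescaled abscissa represents the same object as $w$ on the original abscissa. Taking square roots and using non-negativity of $d_w$ gives (\ref{eq:Winvariance}).

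The point that requires care, rather than any real computational difficulty, is the bookkeeping of the normalization and the weight under the warp: one must notice that the constant $\mu(D_1\cap D_2)^{-1/2}$ in front of $d_w$ is exactly what absorbs the $\sqrt{a}$ produced inside the square root by the change of variables, just as in the unweighted case, and one must make explicit that this cancellation hinges on $W$ being the affine group (\ref{eq:Waffine}). Only a constant Jacobian lets a single scalar factor through both the integral and the measure normalization; for a generic subgroup of diffeomorphisms the space-varying Jacobian would not factor out and $W$-invariance would generally fail. No integrability issue arises, since $D$ is compact and $w\in L^2(D)\subset L^1(D)$, so every integral involved is finite.
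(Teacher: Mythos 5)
Your proof is correct and follows essentially the same route as the paper: a change of variables $y = h(x)$ with constant Jacobian $a$, after which the factor $1/\sqrt{a}$ produced inside the square root cancels against the $\sqrt{a}$ coming from $\mu(h^{-1}(D_1\cap D_2)) = a^{-1}\mu(D_1\cap D_2)$ in the normalization. The convention you rightly flag as necessary --- that the weight is transported along the warp together with the curves --- is exactly what the paper adopts silently by writing $w(h(x))$ rather than $w(x)$ in the integrand for $d_w(f_1\circ h, f_2\circ h)$.
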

\begin{proof}
For proving W-invariance we need to verify property (\ref{eq:Winvariance}). Take $f_1,f_2\in\mathcal{F}$ and $h(x)=ax+b \in W$. Then
\begin{align*}
d_w(f_1 \circ h, f_2 \circ h) & = \frac{\left( \int_{h^{-1}(D_1 \cap D_2)} w(h(x))(f_1(h(x)) - f_2(h(x)))^2 dx\right)^{1/2}}{\sqrt{\mu(h^{-1}(D_1 \cap D_2))}} \\
& = \frac{\left( \int_{D_1 \cap D_2} w(y)(f_1(y) - f_2(y))^2 \frac{dy}{a}\right)^{1/2}}{\sqrt{\int_{h^{-1}(D_1 \cap D_2)}dx}} \\
& = \frac{\left( \int_{D_1 \cap D_2} w(y)(f_1(y) - f_2(y))^2 dy \right)^{1/2} \cdot \frac{1}{\sqrt{a}}}{\sqrt{\int_{D_1 \cap D_2}dy}\cdot\frac{1}{\sqrt{a}}} = d_w(f_1,f_2) 
\end{align*}
where in the second line we used a change of variable defined by $y:=h(x),$ and the fact that in general $\mu(D)=\int_D dx.$
\end{proof}

Therefore, when choosing the functional space $\mathcal{F}=L^2(D),$ the functional norm in (\ref{eq:dL2norm}), and the class of warping functions $W$ of strictly increasing affine transformations, we can conclude that problem (\ref{eq:align}) in step 1 of the algorithm is well-posed, and it correctly decouples phase from amplitude variability. Hence, the variational problem of joint sparse clustering and alignment in (\ref{eq:sparseClustAlign}) is well-posed.

\subsection{Changing the functional space $\mathcal{F}$}\label{subsec:space}

Until now we have assumed the functional space $\mathcal{F}$ for the curves in the dataset to be $L^2.$ Even though this might be a sufficiently good choice in many applications, it is not necessarily always the best, and thus some flexibility with respect to the choice of $\mathcal{F}$ might be beneficial. For instance, the KMA strategy for clustering and alignment as described in subsection \ref{subsec:align} was first introduced in \cite{ssvv} for curves having a continuous first derivative, thus with no lack of generality (and given the natural assumption of compact domains of the curves) we can correctly frame that procedure by choosing $\mathcal{F}$ to be $H^1(D).$ Is it then possible to frame the sparse clustering and alignment setting described above in this space?

The clustering and alignment method proposed in \cite{ssvv} made use of a similarity index among curves $f\in H^1$ (and not of a functional distance) that had nice mathematical properties when used jointly with the class $W$ of warping functions as defined in (\ref{eq:Waffine}). The similarity index was defined as
\begin{equation}\label{eq:rhoKMA}
\rho(f_1,f_2) = \frac{\langle\langle f_1, f_2 \rangle\rangle_{H^1}}{|f_1|_{H^1}|f_2|_{H^1}}
\end{equation}
where for any $f_1,f_2 \in H^1,$ we denote with $\langle\langle \cdot,\cdot \rangle\rangle_{H^1}$ the semi-internal product in $H^1$, i.e.
$$
\langle\langle f,g \rangle\rangle_{H^1} = \langle f^{'}, g^{'} \rangle_{L^2} = \int f^{'} g^{'}
$$
and with $|\cdot|_{H^1}$ the associated semi-norm
$$
|f|_{H^1} = \sqrt{\langle\langle f,f \rangle\rangle_{H^1}} = ||f^{'}||_{L^2} = \sqrt{\int (f^{'})^2}.
$$

The KMA method then performs clustering and alignment via a K-means approach maximising the within-cluster total similarity index as defined by (\ref{eq:rhoKMA}), and choosing the aligning functions in the class $W$ as defined in (\ref{eq:Waffine}). Then, the variational problem for sparse clustering and alignment can still be defined as in (\ref{eq:sparseClustAlign}), provided we choose
\begin{align}\label{eq:WCSSsparseClustAlignRho}
&  g\left( f_1(h_1(x)),\ldots,f_n(h_n(x));\{C_1,\ldots,C_K\} \right)  = \nonumber\\
& = \sum_{k=1}^K\frac{1}{n_k}\sum_{i,j\in C_k}\left\{ \frac{f_i^{'}(h_i(x))}{|f_i|_{H^1}}\cdot\frac{f_j^{'}(h_j(x))}{|f_j|_{H^1}}\cdot 1_{\{h_i^{-1}(D_i) \cap h_j^{-1}(D_j)\}}(x) \right\}.
\end{align}
The algorithm for tackling the variational problem (\ref{eq:sparseClustAlign}) when the $g(\cdot)$ function is based on the within-cluster similarity index as detailed in (\ref{eq:WCSSsparseClustAlignRho}) is unchanged, and the same convergence properties as in the rest of the paper hold (increase of the objective function at every step).

An interesting application of sparse clustering and alignment when using the similarity index introduced in (\ref{eq:rhoKMA}) instead of the distance in $L^2$ is shown in Section \ref{sec:aneurisk}, where the ICA centerlines from the 65 subjects in the AneuRisk65 dataset are analyzed (data are shown in Figure \ref{fig:aneurisk_data}). For these data, as thoroughly discussed in \cite{sangalli2009case}, the use of an $L^2$ type of distance in the variational problem defining clustering and alignment would not make sense, since the functional measure needs to be invariant with respect to curves' dilations or shifts in the 3-D space, which are due to the subjects' different skull dimensions. Hence, the possibility of using a different functional measure in the variational problem (\ref{eq:sparseClustAlign}) defining sparse clustering and alignment is crucial for the analysis of the AneuRisk65 dataset.

\section{Simulations}\label{sec:simulations}

The approach to sparse clustering and alignment of functional data as described in section \ref{subsec:all}, together with its implementing algorithm (shortly named from now on sparse K-mean alignment, or \emph{sparse KMA}) is here tested on synthetic data in various situations. First we test whether the method is sufficiently accurate in an ``easy'' situation (subsection \ref{subsec:sim1}), and then we test it in a more complex setting when we also need to tune $K$, and we compare to the performance of KMA (subsection \ref{subsec:sim2}). In both simulations we use the $L^2$ version of the procedure.

\subsection{Simulation 1: Does sparse clustering and alignment work?}\label{subsec:sim1}
Aim of this simulation study is to test whether sparse KMA works in a relatively ``easy'' setting. The data generation mechanism is as follows:
\begin{itemize}
\item[-] number of classes $K = 2$, both for data generation and for clustering;
\item[-] $100$ observations per class, thus $n = 200$ data in each scenario;
\item[-] cluster true mean functions
\begin{equation}\label{eq:joint1media1g}
f_1(x) = q\cdot x^9 \cdot 1_{[-1,1]}(x),
\end{equation}
\begin{equation}\label{eq:joint1media2}
f_2(x) = q\cdot x^9 \cdot 1_{[-1,0]}(x) + q\cdot x^2 \cdot 1_{[0,1]}(x),
\end{equation}
with $q=1$ (see the top-right panel in Figure \ref{fig:sim1});
\item[-] data $y_i(x)$ $(i=1,\ldots,N)$ are generated from the two mean functions with a random parameter $q_i \sim N(1,0.15^2)$ (see the top-left panel in Figure \ref{fig:sim1});
\item[-] finally, misaligned data are generated from $y_i(x)$ by warping the abscissa $x$ via a curve specific affine warping $h_i(x) = a_i \cdot x + b_i,$ with $a_i \sim U(.9,1.1)$ and $b_i \sim U(-.1,.1)$.
\end{itemize}
The generated data are shown in Figure \ref{fig:sim1}, top-center panel, each curve being coloured according to the true cluster label. It can be noticed that the curves in the 2 clusters are completely overlapped in the negative part of the abscissa, while the difference is evident in the positive part, and it becomes more and more evident towards the domain border. We thus expect to be able to capture this difference via $w(x)$, and we will verify whether the shape of the weighting function also reflects the between-cluster difference.

\begin{figure}[t]
\centering\includegraphics[width=\textwidth]{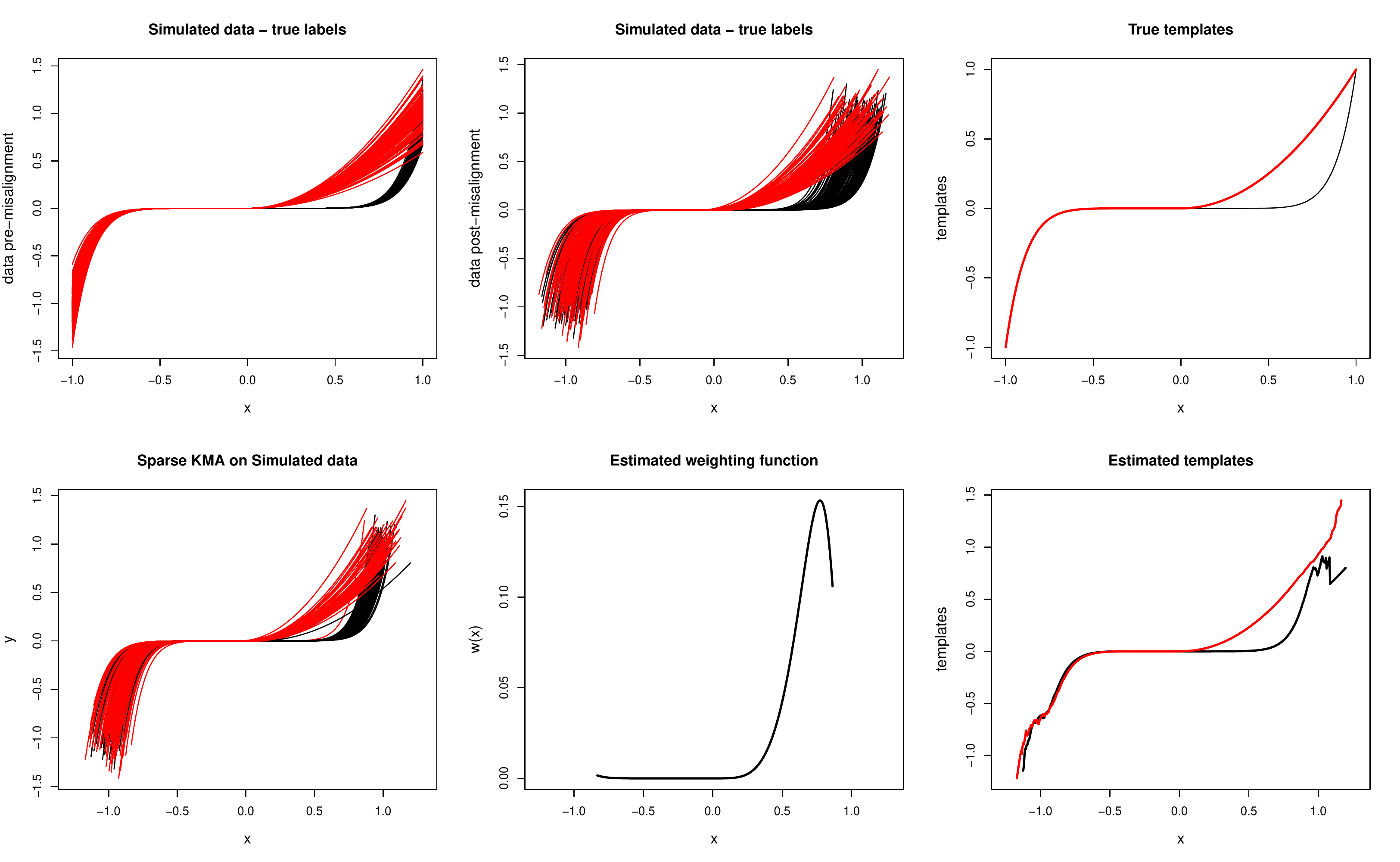}
\caption{one run of sparse KMA on the data of Simulation 1. Top-left, simulated data before misalignment was applied; top-center: simulated curves after misalignment; top-right: true templates used in the simulation; bottom-left, aligned curves after applying sparse KMA; bottom-center, estimated weighting function; bottom-right, estimated cluster templates. In all top panels, curves are coloured according to true labels, while in all bottom panels according to cluster assignments estimated via sparse KMA.\label{fig:sim1}}
\end{figure}

We fix the parameters of the sparse KMA procedure as follows: number of clusters $K=2$ ($K$ is set to the correct number of clusters because we will check the tuning of this parameter in the next section); sparsity parameter $m = 40\%$ (note that one can easily get an idea of this value by simply looking at the curves); tolerance on the functional distance $0.001$ (this is the criterion for stopping the loop, and it refers both to the average functional distance of each datum to the template, and to the average change in $w(\cdot)$, which must be satisfied together with unchanged cluster labels in two subsequent iterations); proportion of allowed warping $.01$ (meaning that we allow, at each iteration in the loop, each $a_i$ to vary in the interval $[.99;1.01]$ and each $b_i$ in the interval $[-.01;.01]$).

The results of one run of sparse KMA are shown in Figure \ref{fig:sim1}. The bottom-left panel shows the aligned curves coloured according to the estimated cluster labels in that run, and the result seems very good, since only 2 curves have been wrongly assigned. The curves alignment, even if satisfactory, does not bring the curves back to the original pre-misalignment shape, as shown in the panel above, and this issue will be further discussed in the next section and in the discussion. The bottom-center panel shows $w(x)$, which correctly points out the relevant portion of the domain (evidently shown in the original data above), and the bottom-right panel shows the cluster templates, which are quite well estimated (if compared to the true templates in the panel above). The average misclassification rate over 20 runs of the procedure was $1.575\%.$ One run of the procedure took on average 35 seconds, and it was converging after 3 to 6 iterations, showing a quite good computational efficiency and stability of the optimization. Initialization of the algorithm was completely random.

\subsection{Simulation 2: Comparison with clustering and alignment, tuning of $K$}\label{subsec:sim2}
The purposes in this second simulation study are several. We first of all would like to test whether sparse KMA works in a more complex setting: more extreme misalignment, and very limited portion of the domain differentiating among clusters. Moreover, we also aim at comparing the results to joint clustering and alignment (KMA) performed \emph{before} sparse clustering (the strategy followed for analysing the growth curves in the Berkeley Growth Study in \cite{fv17}, for instance), i.e., comparing to estimating phase variability first. Finally, we would like to propose a strategy for tuning the number of groups $K$, and to check whether this tuning strategy effectively selects the correct number of groups (for both sparse KMA and KMA).

\begin{figure}[t]
\centering\includegraphics[width=\textwidth]{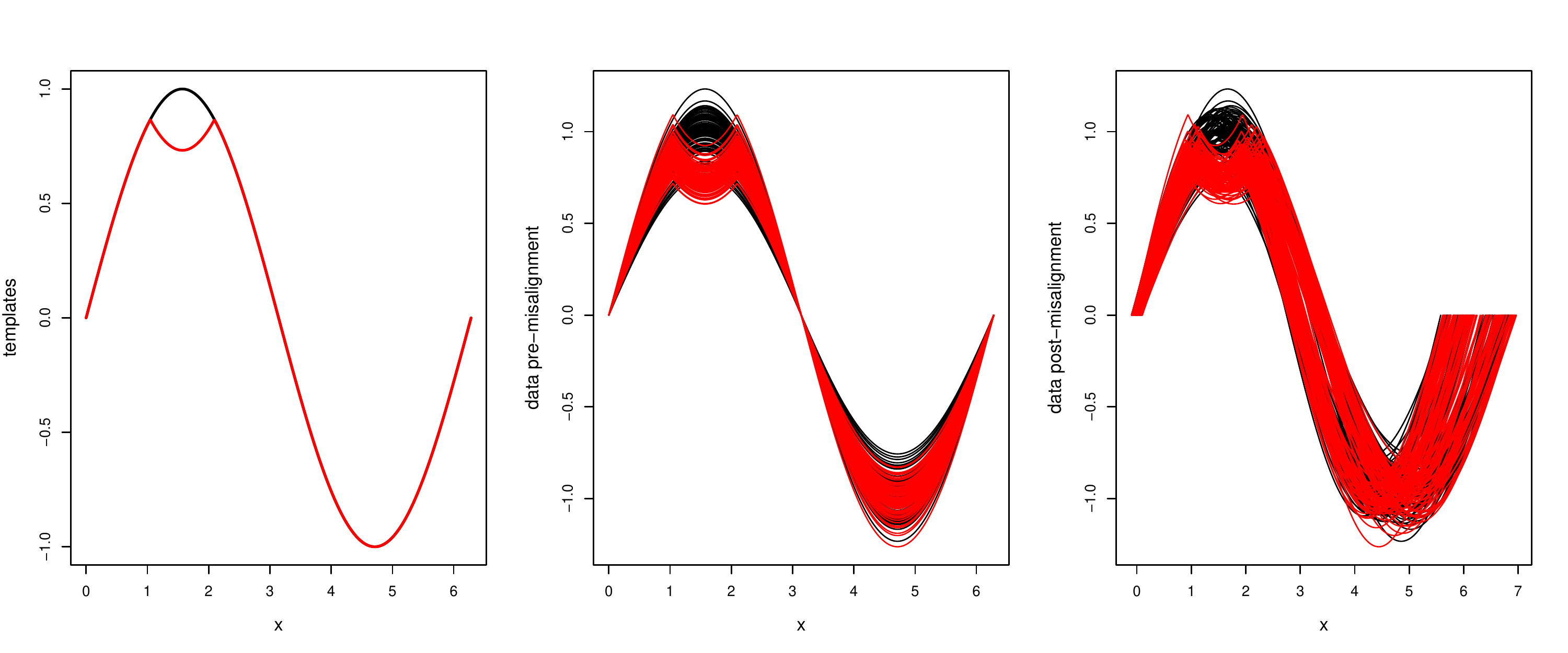}
\caption{one set of data generated for Simulation 2. Left panel, the true cluster templates; center panel, synthetic curves coloured according to true labels, before misalignment; right panel, synthetic curves coloured according to true labels, after misalignment.\label{fig:sim2data}}
\end{figure}

The data generation strategy for this second simulation is the following:
\begin{itemize}
\item[-] $K = 2$ for data generation, and $n = 200$ as before;
\item[-] cluster true mean functions
\begin{equation}\label{eq:joint2media1}
f_1(x) = q \cdot \sin(x) \cdot 1_{[0,2\pi]}(x),
\end{equation}
\begin{equation}\label{eq:joint2media2}
f_2(x) = q \cdot [ \sin(x) \cdot 1_A(x) + (\sqrt 3 - \sin(x)) \cdot 1_{[\pi/3, 2\pi/3]}(x) ],
\end{equation}
with $A=[0,\pi/3]  \cup [2\pi/3, 2\pi]$ (see the left panel in Figure \ref{fig:sim2data});
\item[-] again, data $y_i(x)$ $(i=1,\ldots,N)$ are generated with $q_i \sim N(1,0.15^2)$ (see the center panel in Figure \ref{fig:sim2data});
\item[-] misaligned data are generated by warping the abscissa $x$ via a curve specific affine warping $h_i(x) = a_i \cdot x + b_i;$ we also induce a slight clustering in the phase (visible in the misaligned data, right panel in Figure \ref{fig:sim2data}).
\end{itemize}

The parameters of the sparse KMA procedure are set as follows: tolerance on functional distance $.001$, and proportion of allowed warping $.05.$ The number $K$ of groups in the procedure is varied from 2 to 4, and tuned by looking at the boxplot of the within-cluster distances of the curves to the associated template. Finally, the sparsity parameter $m$ is here fixed to $30\%$ with some trial-and-error, and by inspection of the original data (see the conclusion section for more discussions on this point).
When it comes to performing a comparison of the performance of sparse KMA and KMA, in order for the results of the two methods to be comparable, we need a further specification: the tolerance criterion for stopping the iterative algorithm in sparse KMA must be only on the within-cluster distance, and not on the average change in $w(\cdot).$ Otherwise, the alignment might converge differently for the two procedures. 

\begin{figure}[t]
\centerline{\includegraphics[width=.45\textwidth]{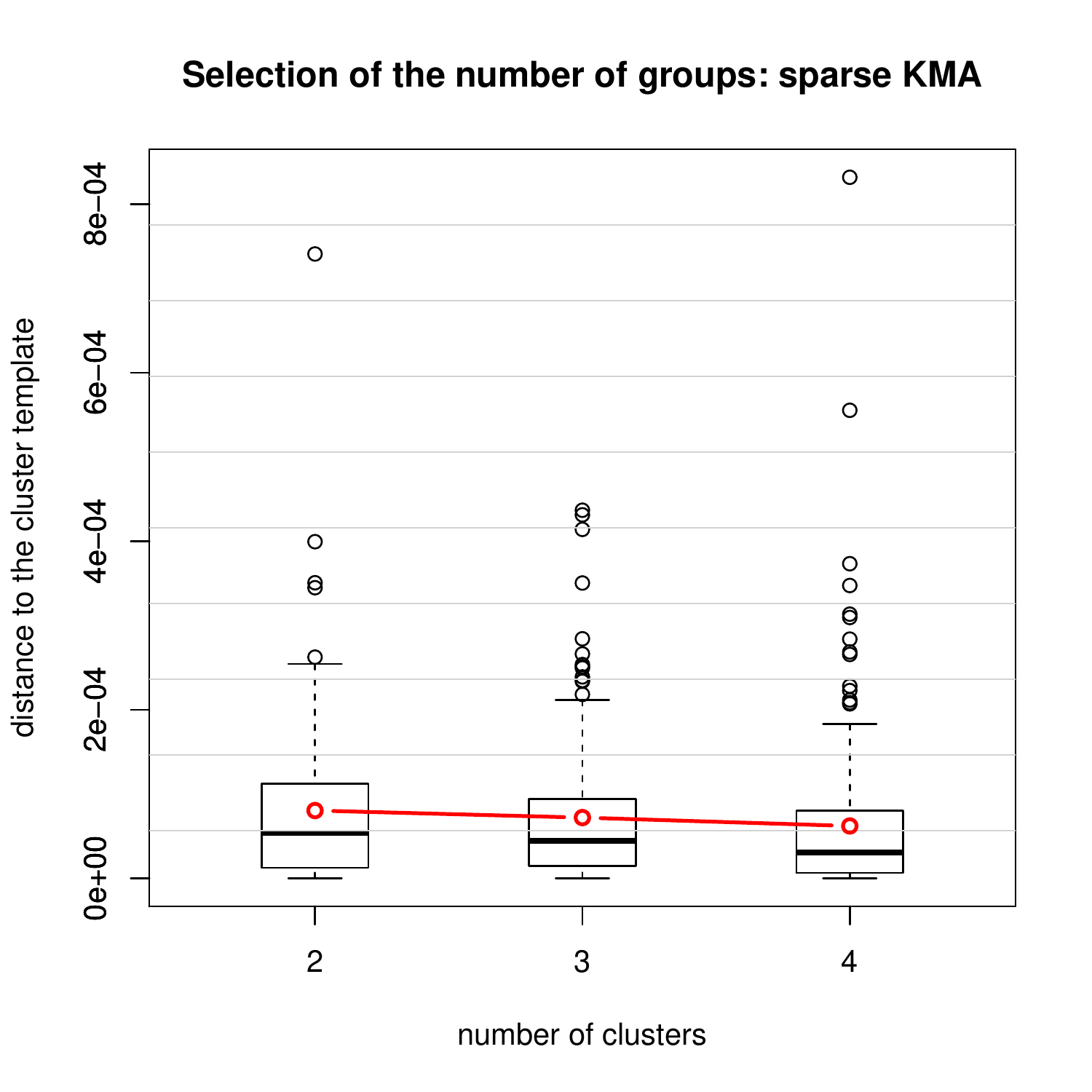}
\includegraphics[width=.45\textwidth]{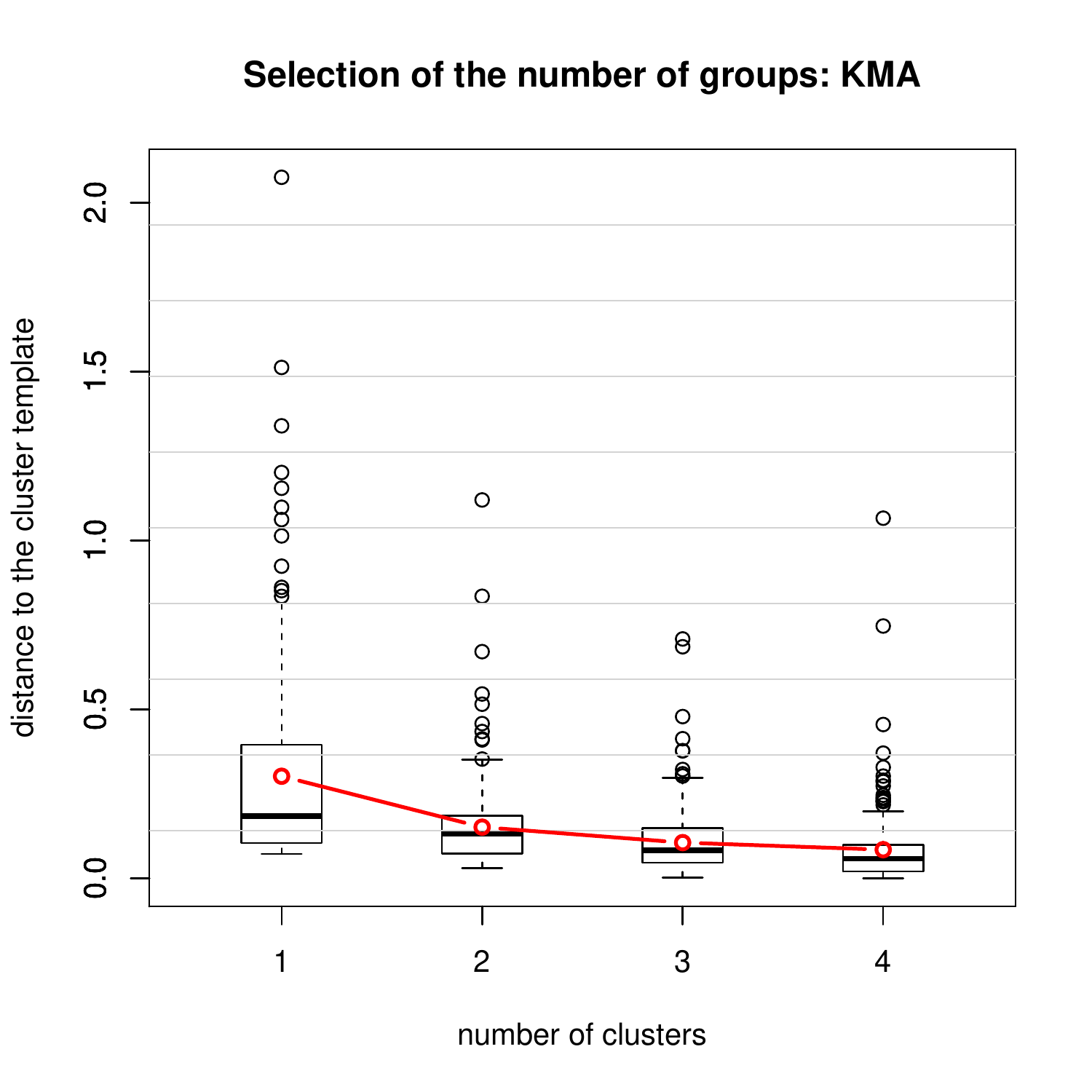}}
\caption{boxplot of the within-cluster distances for one run on the data of Simulation 2, when varying the number of groups. Left panel, sparse KMA; right panel, KMA. The alignment settings are the same for the two procedures.\label{fig:sim2dist}}
\end{figure}

The results obtained via sparse KMA (for one run), in terms of the within-cluster distances, are shown in Figure \ref{fig:sim2dist}, left panel: we see that sparse KMA correctly suggests 2 clusters, since the within-cluster distances do not evidently decrease when increasing the number of groups to more than 2 (p-value of the Mann-Whitney test for significant difference in the within-cluster distances when increasing from 2 to 3 groups was large in all runs). Figure \ref{fig:sim2dist} also shows (right panel) the same within-cluster distances after one run of KMA on the same data, for $K=1,2,3,4$ and with the same parameter settings for alignment as in sparse KMA: the boxplots would suggest 4 clusters (at least) for KMA (p-value 0 for Mann-Whitney in all runs). Note that by definition of KMA, and differently from sparse KMA, the procedure can be run also for $K=1:$ this would be the right conclusion on the number of groups if one aims at performing sparse clustering on the aligned curves afterwards. However, the amplitude variability, even if localized, acts as a confounder for KMA, which fails to align the curves to only one group and would select an overly complex grouping structure.

\begin{figure}[t]
\centerline{\includegraphics[width=\textwidth]{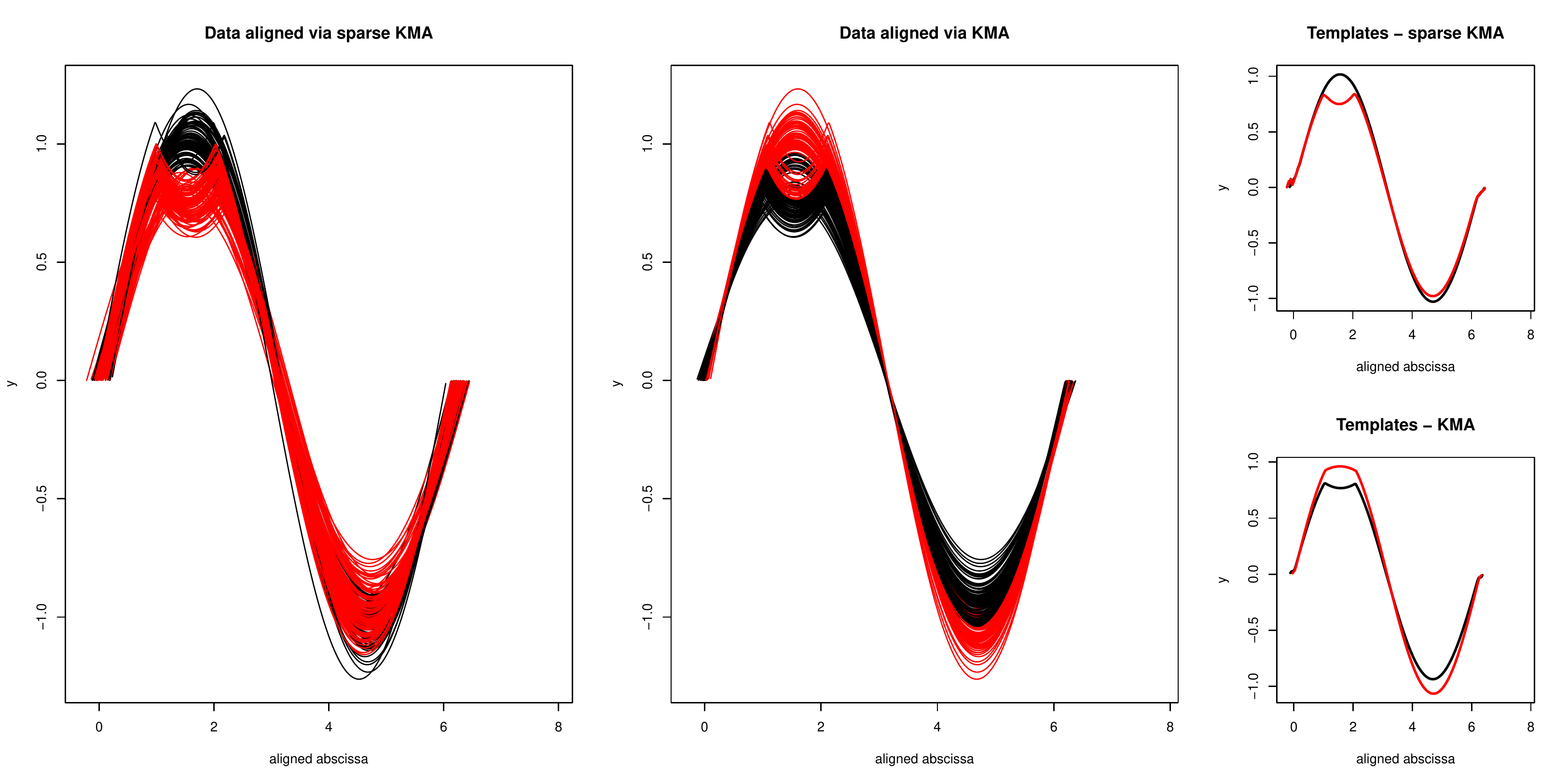}}
\caption{results of one of the runs on the data of Simulation 2, when $K=2$. Left panel, aligned data via sparse KMA, coloured according to the estimated cluster labels; center panel, aligned data via KMA, coloured according to the estimated cluster labels; the 2 right panels show the templates estimated via sparse KMA (top) and KMA (bottom).\label{fig:sim2comparison}}
\end{figure}

Figure \ref{fig:sim2dist} has lead us to conclude that there is an evident difference in performance of sparse KMA when compared to KMA, since the latter procedure seems to fail at selecting the correct number of groups. We will now inspect whether, when fixing $K=2,$ sparse KMA also shows an improved performance in terms of detecting the true groups. First of all, when $K=2$ is used in both procedures, the average misclassification rate over 20 runs of sparse KMA was 9.475\%, while it was nearly 35\% for KMA. Moreover, if we look at the final result of one of the runs, as shown in Figure \ref{fig:sim2comparison}, it is evident that the non-sparse KMA procedure fails at clustering the data because the difference among groups is too much localized in the domain: the data clustered and aligned by sparse KMA (left panel) reflect the true grouping structure, while KMA aligns well but fails at clustering (center panel). This is also reflected in the templates estimated via the two procedures, much closer to the truth for sparse KMA (top-right panel) than for KMA (bottom-right panel).

In the light of the results of Simulation 2, we can conclude that joint sparse clustering and alignment improves quite substantially the clustering results, particularly when the amplitude difference among clusters is localized in the domain. However, the alignment of the data resulting from sparse KMA is slightly worse than with non-sparse clustering and alignment. Conversely, clustering and alignment alone fails to correctly estimate the templates in the relevant portion of the domain, and thus misclassifies quite heavily the data, even though it performs a slightly better alignment. Clustering and alignment alone also overestimates the number of groups, which might be the result of not decoupling correctly phase and amplitude variability in this complex scenario. All in all, we thus conclude that jointly performing sparse clustering and alignment improves the final result of the analysis.

\section{Case Study: the Berkeley Growth Study dataset}\label{sec:growth}

As illustrative example, we consider the Berkeley Growth Study, a very famous benchmark dataset for functional data analysis also provided in the \texttt{fda} package in \textsf{R} \citep{fda-package}. The analysis of these 93 growth curves here reported should be considered as a follow-up to the analysis shown in \cite{fv17}: in that paper, a sparse clustering procedure was used to detect a 2-groups classification on the set of growth velocities, \emph{after} they had been aligned via $1$-mean alignment \citep{ssvv}. The very interesting result of that analysis was that sparse clustering was able to further cluster the data into 2 groups, essentially based on the presence/absence of the mid-spurt in the associated growth velocity curve (see Figure 7 in \cite{fv17}). This finding, together with the results obtained via 1-mean alignment and reported in \cite{sangalli2010functional}, showing a clear gender stratification in the estimated warping functions, would support the following conclusion: the Berkeley Growth Study curves show a gender grouping in the phase variability, a main amplitude feature (the main growth spurt, around 12 years of aligned age) common to all children, and a very localized amplitude clustering according to the presence/absence of the mid-spurt (between 2 and 5 years of aligned age). Aim of the present section is to confirm and validate these interesting findings via a unified procedure able to decouple the localized clustering in the amplitude from the phase variability, and possibly to gain more insight on this thoroughly-studied dataset.

The data in the Berkeley Growth Study consist of the heights (in cm) of 93 children, measured quarterly from 1 to 2 years, annually from 2 to 8 years and biannually from 8 to 18 years. The growth curves and derivatives are estimated, starting from such measurements, by means of monotonic cubic regression splines \citep{rs2}, and the growth velocities are further used in the analysis (see the top-left panel in Figure \ref{fig:growth}). Indeed, when looking at the growth velocities, the growth features shown by most children appear more evident (mainly the main pubertal growth spurt around 12 years), but growth velocities also clearly show that each child follows her own biological clock. Using a sparse functional clustering procedure able to also account for such phase variability would then surely be beneficial.

The sparse KMA procedure is set as follows: we use the $L^2$ setting for coherence to \cite{fv17}, and $K=2$ to be able to compare the current findings with the ones reported in the previous paper. The sparsity parameter $m$ is set to 55\% after some tuning, percentage of allowed warping to 4\%, and tolerance for the iterative procedure to $0.005$, to ensure good convergence. Figure \ref{fig:growth} shows, in the top panels, the original misaligned growth velocities (top-left) and the same growth velocities after sparse 2-mean alignment (top-center), when coloured according to the estimated cluster labels: the resulting clustering structure is essentially based on the presence or absence of the mid-spurt (3-5 years in aligned age), while the onset of the main growth velocity spurt (10-12 years in aligned age) seems mostly aligned across children. This feature distinguishing the clusters can be seen even more evidently in the weighting function (bottom-left panel of Figure \ref{fig:growth}), that clearly points out as relevant part of the domain the mid-spurt onset, and with a much smaller peak the main growth velocity spurt. When inspecting the estimated cluster templates (bottom-center panel of Figure \ref{fig:growth}), the mid-spurt in growth velocity is evidently present in the black cluster template, which shows two growth peaks, and absent in the red cluster template, which is nearly flat before the onset of the main pubertal spurt. 

\begin{figure}[t]
\centerline{\includegraphics[width=\textwidth]{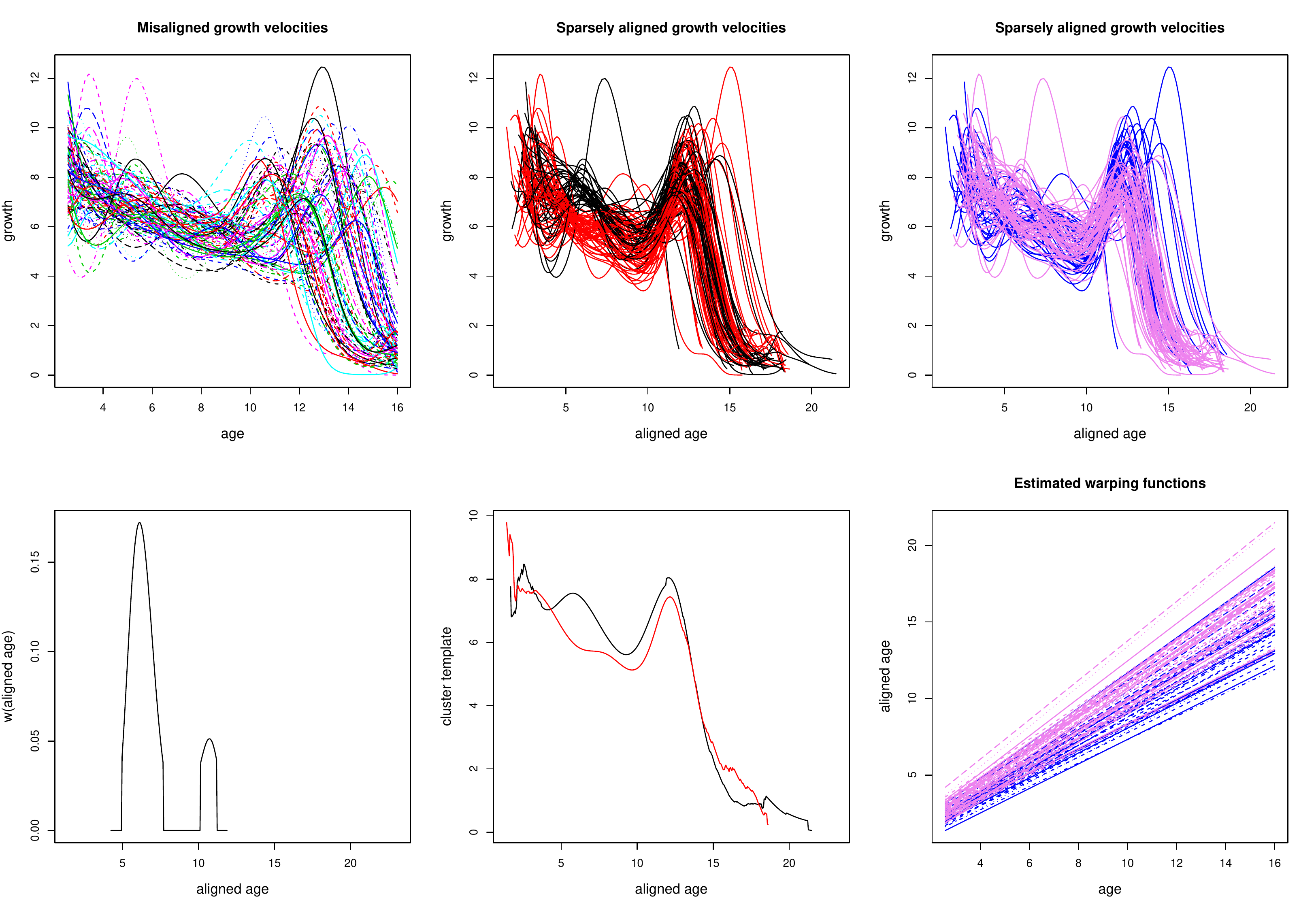}}
\caption{results of sparse KMA on the Berkeley Growth Study data. Top-left, original growth velocities; top-center, clustered and aligned growth velocities via sparse KMA with 2 clusters, coloured according to the estimated cluster labels; top-right, same as top-center but colored in blue for boys and pink for girls; bottom-left, estimated weighting function; bottom-center, estimated templates, with labels colouring (as above); bottom-right, estimated warping functions, with gender colouring (as above).\label{fig:growth}}
\end{figure}

Even if we do not aim at performing any supervised classification of boys and girls, we recognize that gender plays a key role in shaping the kids' growth patterns, mainly in timing their biological clocks. Therefore, we would like to recognize this pattern in the estimated phase variability. The two right-most panels in Figure \ref{fig:growth} explore the relationship between gender stratification and phase/amplitude variability in this dataset: in the top-right panel of Figure \ref{fig:growth}, the same growth velocities after alignment via sparse KMA as shown in the top-center panel have been coloured in blue for boys (39 curves) and pink for girls (54 curves). No pattern is visible, and it looks like both boys and girls might show (or not show) a mid-spurt. Therefore, no gender stratification can be detected in the amplitude clusters. More interestingly, in the bottom-right panel of Figure \ref{fig:growth} the estimated affine warping functions are coloured according to gender consistently to the panel above: this plot evidently shows that the gender stratification is present in the phase. Indeed, most girls show larger slope and intercepts as compared to boys, coherently to the well-known biological knowledge stating that girls grow stochastically earlier and faster than boys. All these findings are completely coherent to what had previously been reported on the same dataset \citep{ssvv,sangalli2010functional}, with the added insight given by the localized amplitude grouping structure. To further confirm such conclusion, we can analyse the estimated grouping structure, and compare it to the one detected in \cite{fv17}: the clusters detected via sparse KMA nearly completely coincide to those found in \cite{fv17}, with only 7 curves out of 93 which are assigned differently. This result is quite striking, since it validates the localized amplitude features detected in \cite{fv17}, but it combines those findings with a coherent decoupling of phase and amplitude variability.

\section{Analysis of the ICA centerlines of the AneuRisk65 dataset}\label{sec:aneurisk}

We illustrate here another possible application of interest for the sparse clustering and alignment method: the AneuRisk65 data set. The AneuRisk project was a large scientific endeavour (see \texttt{https://statistics.mox.polimi.it/aneurisk/}) aimed at studying how the cerebral vessel morphology could impact the pathogenesis of cerebral aneurysms. The AneuRisk65 data set is based on a set of three-dimensional angiographic images taken from 65 subjects, hospitalized at Niguarda Hospital in Milan (Italy) from September 2002 to October 2005, who were suspected of being affected by cerebral aneurysm. From the angiographies of the 65 subjects, among other information, estimates of the ICA centerlines were obtained via three-dimensional free-knot regression splines \citep{sangalli2009efficient}. The first derivatives $x^\prime$, $y^\prime$ and $z^\prime$ of the estimated curves, already displayed and commented in Figure \ref{fig:aneurisk_data}, show an evident phase variability related to the very different dimensions and proportions of the subjects' skulls. However, this variability is ancillar to the scope of the analysis, that is the investigation of a possible grouping structure in the curves' shapes, related to differences in the cerebral morphology across subjects. Therefore, joint clustering and alignment of these curves was already presented in several papers \citep{sangalli2010functional,ssvv}, and two different amplitude clusters were detected and associated to the aneurysm position along the ICA. One of the evident results of those analyses, moreover, was that the shape differences were mostly localized in a specific part of the domain, closer to the brain: hence, we aim here at verifying that the same decoupling of phase and amplitude variability can be detected via sparse clustering and alignment, when also managing to automatically select the part of the domain most relevant to the clustering scope.

\begin{figure}[t]
\centerline{\includegraphics[width=\textwidth]{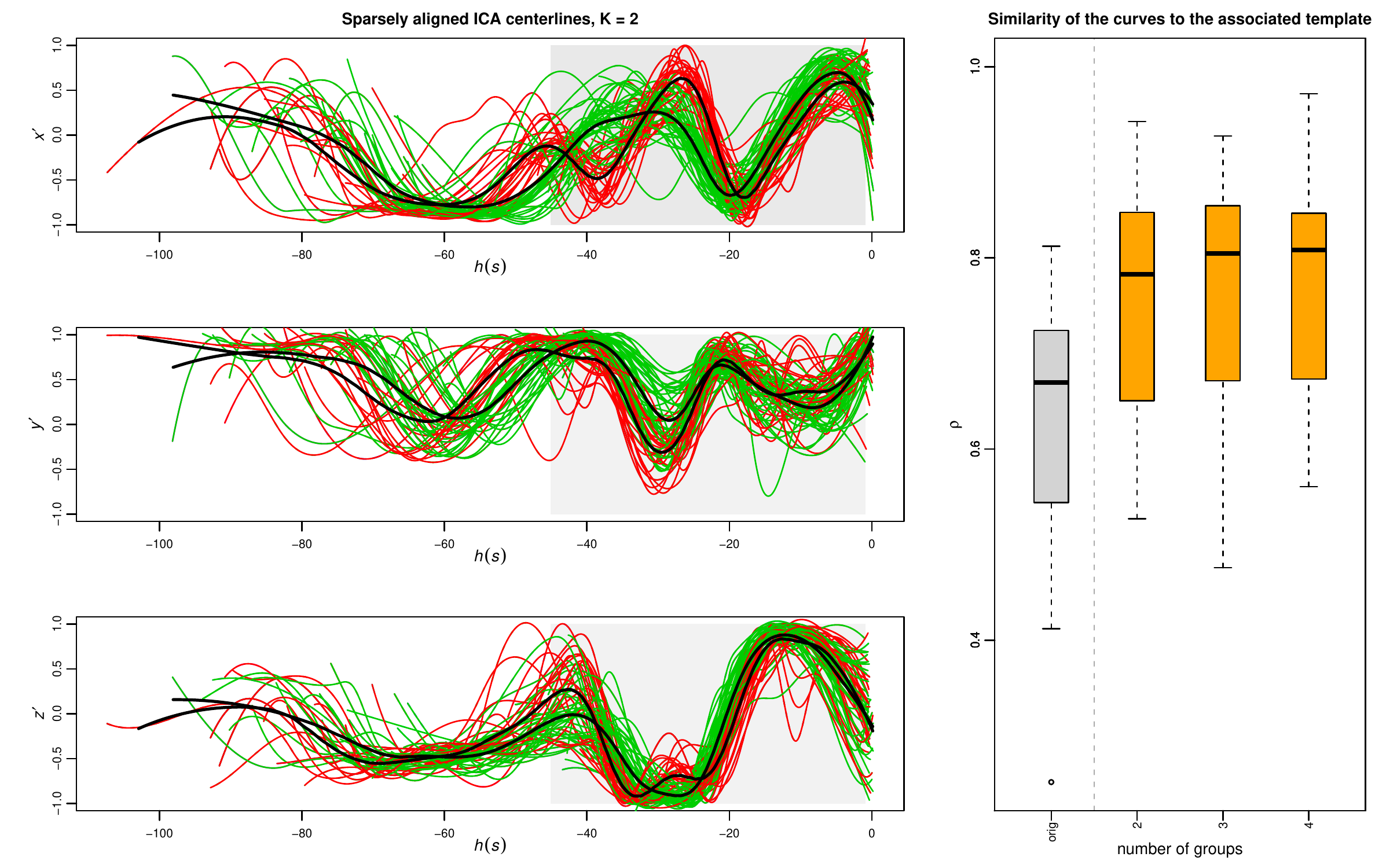}}
\caption{results of sparse KMA on the AneuRisk65 dataset. Left panels, sparsely clustered and aligned first derivatives of the three-dimensional ICA centerlines (from top to bottom, $x^\prime$, $y^\prime$ and $z^\prime$) via sparse KMA with 2 clusters, coloured according to the estimated cluster labels; the grey area in the plot indicates the region of the domain where the weighting function is different from zero. Right panel, value of the functional measure of similarity between each ICA centerline and the associated template, for the original data (shown in Figure \ref{fig:aneurisk_data}, grey boxplot), and for the sparsely aligned data with different $K$s (orange boxplots).\label{fig:aneurisk_results}}
\end{figure}

For this application, we need to adapt the sparse KMA in several ways: first of all, we specify the sparse clustering and alignment variational problem (\ref{eq:sparseClustAlign}) with the functional measure defined in (\ref{eq:rhoKMA}), since this measure based on the curves' derivatives (it is basically a semi-covariance in $H^1$) is more suited for capturing the vessel morphology. This means that, as detailed in (\ref{eq:WCSSsparseClustAlignRho}), the formulation of the variational problem will be based on the maximization of the within-cluster similarity, and not on the maximization of the between-cluster distance. Secondly, the curves are in this case multidimensional, since the statistical unit of the AneuRisk65 data set is a function $f : \mathbb{R}\rightarrow\mathbb{R}^3$. Hence, the natural generalization of the method to multidimensional curves will be used, i.e., the functional similarity in (\ref{eq:rhoKMA}) will be defined as an average of the similarity along the three dimensions. Finally, as it is evident from a look at the curves in Figure \ref{fig:aneurisk_data}, the curves' domains are very different: from the common starting point at the terminal bifurcation of the ICA, some curves are observed for quite long portions of the ICA towards the heart, while others are pretty short, and this uniquely depends on the scan position during the examination (i.e., this has nothing to do with the morphology). Therefore, we need a robust implementation for the estimation of the cluster-specific templates, because on the leftmost portion of the domain very few curves are observed. For this reason, we estimate the template over the domain identified by the union of the domains of the curves assigned to the cluster with the same estimation procedure as defined in \cite{ssvv}, i.e., we use \emph{Loess}, an adaptive fitting that keeps the variance as constant as possible along the abscissa \citep{hastie43tibshirani}.

The sparse KMA procedure is set as follows: we vary $K$ from 2 to 4 (more than 4 groups seems unreasonable given the sample dimension), and select the groups according to an elbow criterion on the final similarity to the template of the associated cluster (same criterion used in \cite{ssvv}). The sparsity parameter $m$ is set to 60\% after some tuning, percentage of allowed warping to 10\% (since misalignment is quite extreme in this application), and tolerance for the iterative procedure to $0.01$, to ensure good convergence in this extreme situation. Figure \ref{fig:aneurisk_results} shows, in the right panel, the boxplot of the initial similarity of the misaligned curves to the overall template (as shown in Figure \ref{fig:aneurisk_data}) in grey, and the final similarity of the aligned curves to the associated template after sparse KMA with $K=2,3,4$ in orange: it is evident that, after an initial strong improvement, increasing the groups to more than 2 does not correspond in any increase of the within-cluster similarity. In the left panels of the same figure the aligned ICA centerlines after sparse 2-mean alignment are shown (from top to bottom, $x^\prime$, $y^\prime$ and $z^\prime$), coloured according to the estimated cluster labels, and with grey background in the portion of the aligned domain where $w(\cdot)$ is non-zero. The resulting cluster morphology and the 2 cluster templates show an astonishing coherence to the results shown in \cite{ssvv} (Figure 12, bottom-right panels): the template shapes are essentially the same, and the remaining within-cluster variability seems comparable. Not only, the grouping structure detected by KMA in \cite{ssvv} is indeed the same as detected here, with only 4 ICA centerlines being differently assigned, all corresponding to a non-aneurysm subject (which means that the reason for the different assignment might indeed be that these ICAs are indeed ``different''). Finally, the portion of the domain selected from $w(\cdot)$ is essentially the same portion of the centerlines which was manually outlined in \cite{ssvv} (Figure 13) as the relevant one. For all these reasons, analysing the AnueRisk65 data set via sparse KMA provides an automatic validation of the previous findings, thus undoubtedly confirming the advantages of a jointly sparse procedure, which can give more insight on the data and more immediate interpretability of the final result without compromising the capability of decoupling phase and amplitude variability.

\section{Conclusions and discussion of further developments}\label{sec:conc}

In this paper, a novel method for jointly performing sparse clustering and alignment of functional data was introduced and described in details, together with a discussion of its relevant mathematical properties. Sparse clustering and alignment was then tested on simulated data, and its performance compared with a competitive approach. Finally, results obtained with this method on two case studies, the well-known growth curves from the Berkeley Growth Study and the AneuRisk65 data set, were discussed in the light of previous findings. Sparse clustering and alignment is a completely novel methodological approach that fills a gap in the FDA literature on joint clustering and alignment. Results seem very promising, and the method showed to work pretty well even in the presence of a quite complex data structure. When compared to previous findings in the real case studies, novel interesting insights on the data were enlightened by the use of joint sparse clustering and alignment.

Nonetheless, the method is still in its infancy. First and foremost, a strategy for automatically tuning the sparsity parameter $m$ would be beneficial for avoiding many trial-and-error attempts. One possibility could be following the GAP statistics permutation-based approach proposed in \cite{fv17}, even though this strategy might need some adaptations to work in case misalignment is also present. Another approach, more targeted to the misalignment level observed in the data, could be computing some sufficient statistics describing phase and amplitude variation, and then relying on these for fixing $m$. Interestingly, we have already observed that increasing $m$ when the sparsity in the data looks more extreme might not be optimal, since misalignment plays a role as well: depending on the observed variation in the phase, it might be beneficial to keep $m$ moderately small, especially since the cluster templates are substantially different only on the non-zero part of $w$. This suggests a possible link between the optimal $m$ for a given dataset, and the level of phase / amplitude variation in the data: this might be a very relevant direction for future research. 

Secondly, we observed in the simulation studies described in Section \ref{sec:simulations} that often sparse clustering and alignment would result in less extreme curve alignment than, for example, clustering and alignment alone. A question might then arise in the reader whether this lack of alignment might be problematic. For what we have observed so far, it looked like there could exist a trade-off between level of sparsity and goodness of alignment: methods more explicitly targeted to alignment would capture phase variability pretty well, but at the cost of a worse estimation of the cluster templates, especially if these differ in very localized areas of the domain. Sparse clustering and alignment would instead manage to estimate the templates very well, even when keeping $m$ pretty small ``to let the data speak'', but it would leave some of the variation in the phase. Undoubtedly, a better understanding of the relationship between $m$ and $W$ would spread some light on this open aspect, as well.

Finally, another point that surely needs further inspection is the possible definition of sparse clustering and alignment also for $K=1$. Currently the method does not handle the case of one group, because the weighting function $w$ is not properly defined. One easy fix, quite coherent from a purely theoretical standpoint, would be stating that sparse clustering and alignment degenerates into pure alignment when $K=1,$ as is the case for clustering and alignment with $K=1$. However, one might then speculate whether the clustering might be present in the phase as well, and maybe in a sparse fashion. Therefore, it could be interesting to think of a generalization of sparse clustering and alignment capable of estimating the grouping structure both in the amplitude and in the phase variability, with the possibility to use a specific sparsity constraint for each of the two clusterings.

Other future research directions include, but are not limited to: investigating the global convergence of the algorithm, and / or finding ways to tackle the variational problem (\ref{eq:sparseClustAlign}) globally; developing a probabilistic approach to sparse clustering and alignment, for example by taking into account the point-wise curves density; and finally performing model-based clustering. All of these would lead to a nice paper in themselves, and are thus left to future speculation.

\section*{Implementation}
The method is implemented in \texttt{R} \citep{R}, and all functions are available upon request. A repository on sparse clustering methods for functional data is under construction at \texttt{https://github.com/valeriavitelli}.


\bibliographystyle{elsarticle-harv} 
\bibliography{myrefs}

%
%
%
%
\end{document}